\documentclass[letterpaper, 10 pt, conference]{ieeeconf}  

\IEEEoverridecommandlockouts                              
\usepackage{graphics} 
\usepackage{epsfig} 
\usepackage{amsmath} 
\usepackage{amssymb}  

\usepackage{tikz}
\usepackage{eso-pic}
\usepackage{xcolor}

\renewcommand\fbox{\fcolorbox{black}{gray!10}}

\newcommand\acceptedmanuscripttext{%
	
	\textbf{\textcopyright 2026 European Control Association.}\\
	
	\textbf{This paper has been accepted for presentation at the \textit{2026 European Control Conference (ECC)}}.
}

\AddToShipoutPicture*{%
	\AtPageUpperLeft{%
		\begin{tikzpicture}[remember picture,overlay]
			\node[
			anchor=north,
			yshift=-10pt
			] at (current page.north)
			{%
				\fbox{%
					\parbox{\dimexpr\textwidth-2\fboxsep-2\fboxrule}{%
						\acceptedmanuscripttext
					}%
				}%
			};
		\end{tikzpicture}
	}%
}

\title{\LARGE \bf
	A Floquet Mode LQR for Orbital Station-Keeping in Cislunar Space
}

\author{António Nunes, Sérgio Brás, and Pedro Batista
	\thanks{This work was supported by LARSyS FCT funding (DOI: 10.54499/LA/P/0083/2020, 10.54499/UIDP/50009/2020, and 10.54499/UIDB/50009/2020), and by NEURASPACE Project, Contract No. 9, under Regulation (EU) 2021/241 of the European Parliament and of the Council of February 12, 2021 and the Portuguese Recovery and Resilience Program (PRR), in component 05-Capitalization and Business Innovation, under Notice No. 425 01/C05-i01/2021 of the Regulation of Mobilizing Agendas/Alliances for reindustrialization}
	\thanks{António W. Nunes is with the Institute for Systems and Robotics, Instituto Superior Técnico, Universidade de Lisboa, Portugal (e-mail: {\tt \footnotesize antonio.w.nunes@tecnico.ulisboa.pt}) }
	\thanks{Pedro Batista is with the Institute for Systems and Robotics,
		Instituto Superior Técnico, Universidade de Lisboa, Portugal}%
	\thanks{Sérgio Brás is with the European Space Agency, The Netherlands}%
}

\newtheorem{proposition}{Proposition}[section]

\newcommand{\norm}[1]{|\!|#1|\!|}

\begin{document}

\maketitle
\thispagestyle{empty}
\pagestyle{empty}

\begin{abstract}
	
A linear optimal control law for orbital station-keeping in the Earth-Moon Restricted Three Body Problem (R3BP) is developed via Linear Quadratic Regulator (LQR) theory. First, the cost function is established considering a periodic state-weight matrix, leveraging stability information of the target orbits retrieved through Floquet theory. Then, the resulting periodic Riccati differential equation is solved and local asymptotic stability guarantees are shown. Finally, the performance of the proposed LQR when tracking periodic orbits in the circular and elliptic R3BPs is analyzed numerically.
	
\end{abstract}

\section{Introduction}
	
NASA's Artemis program and the Gateway lunar orbital platform \cite{creech2022Artemis} have motivated a new wave of scientific research on the operation of spacecraft in cislunar space. At the academic level, the Restricted Three-Body Problem (R3BP) provides an adequate first approximation for spacecraft dynamics in this context. Current work is focused on tracking periodic orbits about the five Lagrange (equilibrium) points of the R3BP, due to their advantageous positioning and communications clearance. However, these solutions are unstable, meaning that a control strategy is imperative to guarantee orbit maintenance and accurate trajectory tracking. Typically, the control strategies are designed to drive deviations from a target point evolving along the nominal orbit to zero, which formally constitutes an orbital \textit{station-keeping} problem.

Traditionally, the orbital station-keeping problem has been primarily tackled through discrete solutions that resort to impulsive maneuvers. In particular, if the equations of motion (EoM) are linearized, the design of optimal controllers becomes relatively straightforward, leading to famous strategies such as differential correction methods \cite{folta2010stationkeeping, pavlak2012optimal}. However, the demanding maneuvers requested by discrete approaches rely on classic high-thrust propulsion systems, precluding the use of emerging electric propulsion technologies that are highly efficient but constrained to low-magnitude thrust. As an alternative, continuous station-keeping solutions, compatible with electric propulsion, have also been investigated. Yet, optimal continuous strategies often lead to large-scale optimization problems \cite{ulybyshev2015LP} or consider advanced techniques, such as model predictive control \cite{Du2022MPC,elobaid2022MPC}, which significantly increase the computational burden of the application. To this end, some alternatives drop the optimality concern, pursuing instead formal asymptotic stability guarantees, typically through standard techniques anchored on Lyapunov theory, e.g. nonlinear backstepping \cite{Nunes2025backstepping} or feedback linearization \cite{Nazari2017LQR_Backstepping}.

An approach that provides a balance between optimality considerations, formal guarantees, and computational lightness is that of Linear Quadratic Regulator (LQR) theory. But while LQR-based station-keeping strategies already exist in literature \cite{Nazari2017LQR_Backstepping, ghorbani2013optimalcont}, these typically employ fixed weight matrices, tuned heuristically, that do not take the dynamics at play fully into account. As a result, the controller performance may be subpar, in spite of the optimization efforts.


To address the shortcomings identified in the literature, this work proposes a linear optimal control law for orbital station-keeping, derived from LQR theory, that leverages user-defined objectives with time-varying stability information of the target orbits, retrieved through Floquet theory. To ensure generality, the proposed solution is developed under the two most common dynamical models for spacecraft motion in cislunar space: the circular and elliptic R3BPs. We show that this solution minimizes a dynamically-informed cost functional and ensures local asymptotic stability when targeting any orbit within both models of the dynamics.

The remainder of this paper is organized as follows. In Section~\ref{sec:dyn}, the two R3BP formulations are presented and compared. Then, a brief introduction to Floquet theory and the stability of periodic orbits is provided in Section~\ref{sec:floquet_theory}. The design of the proposed solution and outline of its formal guarantees are approached in Section~\ref{sec:cont}. In Section~\ref{sec:results}, the solution is evaluated numerically over relevant application cases. Finally, Section~\ref{sec:conc} provides a discussion on the most meaningful conclusions and topics for future investigation.

\section{R3BP Dynamics}
\label{sec:dyn}

The R3BP provides a model for the dynamics of a small mass (e.g., a spacecraft) under the gravitational pull of two nearby massive bodies, named primaries. Given that the spacecraft's mass is negligible in comparison with those of the primaries, the latter constitute an isolated system that is known to have a limited set of solutions \cite{szebehely1967TheoryOfOrbits}. The case of Keplerian orbits around the shared barycenter is especially relevant, given that it provides an adequate approximation for the Earth-Moon relative motion. Two possibilities may thus be considered:
(i) the Circular Restricted Three-Body Problem (CR3BP), where the orbits of the primaries are approximated to circles, or (ii) the Elliptic Restricted Three-Body Problem (ER3BP), a more realistic, albeit more complex case, where elliptical orbits are considered instead.

In both versions of the R3BP, the spacecraft's EoM are established under a \textit{synodic} reference frame that rotates to match the primaries' angular speed. The $X$ direction aligns both primaries, $Z$ is directed along the angular speed vector, and $Y$ completes the right-handed coordinate system. Typical scaling brings the distance between primaries to unity and their orbital period to $2\pi$, resulting in unit mean angular speed. Moreover, their masses are brought to ${m_1=1-\mu}$ and ${m_2=\mu}$, with $\mu$ a system-specific constant (${\mu\approx0.01215}$, for the Earth-Moon). This fixes the positions of $m_1$ and $m_2$ to $(-\mu,0,0)$ and $(1-\mu,0,0)$, respectively.

\subsection{Circular Restricted Three-Body Problem}

Let ${\mathbf{r}^T=\begin{bmatrix}
		x & y & z
\end{bmatrix}}$ and ${\mathbf{v}^T=\begin{bmatrix}
		\dot{x} & \dot{y} & \dot{z}
\end{bmatrix}}$ denote the spacecraft position and velocity, respectively, under the synodic reference frame and proposed scaling, adopting the dot notation to represent time derivatives.
Letting $r_1$ and $r_2$ denote the (scalar) distance between the spacecraft and each primary allows one to write an effective CR3BP potential as $U = \frac{x^2+y^2}{2} + \frac{1-\mu}{r_1} + \frac{\mu}{r_2}$. The resulting spacecraft EoM are
\begin{equation}
	\label{eq:eom_matrix_form}
	\begin{aligned}
		\dot{\mathbf{r}} &= \mathbf{I_3}\mathbf{v},\\
		\dot{\mathbf{v}} &= \boldsymbol{\Omega}\mathbf{v} + \nabla U(\mathbf{r}),
	\end{aligned}
	\qquad \text{with} \qquad 
	\boldsymbol{\Omega}=\begin{bmatrix}
		0 & 2 & 0 \\ -2 & 0 & 0 \\ 0 & 0 & 0
	\end{bmatrix},
\end{equation}
where $\mathbf{I_3}$ is the $3\times3$ identity matrix, and $\nabla U(\mathbf{r})$ is the gradient of the effective potential, evaluated at $\mathbf{r}$, \cite{szebehely1967TheoryOfOrbits}.

\subsection{Elliptic Restricted Three-Body Problem}
\label{subsec:er3bp}

Since the angular speed and distance between primaries are not constant in the ER3BP, the reference frame must rotate at varying speed and pulsate to fix the primaries' location. We denote quantities in this frame by $\bar{(~\cdot~)}$. Moreover, the \textit{true anomaly}\footnote{The true anomaly corresponds to the angular position of the primaries along their orbits, relative to the major ellipse axis. The primaries are at their closest for $\nu=0$ (\textit{periapsis}), and at their farthest for $\nu=\pi$ (\textit{apoapsis}).} of the primaries is typically adopted as the independent variable, instead of time. Under these considerations, the equivalent ER3BP effective potential may be written as $\bar U = \frac{r}{p} \left( \frac{\bar x^2 + \bar y^2}{2} + \frac{1-\mu}{\bar r_1} + \frac{\mu}{\bar r_2} - \frac{1}{2} e \cos(\nu) \bar z^2 \right)$, where ${r=\frac{p}{1+e\cos(\nu)}}$ is the dimensionless distance between primaries, $e$ is the eccentricity of their orbits, and ${p := (1-e^2)}$ \cite{szebehely1967TheoryOfOrbits}. Adopting $(~\cdot~)'$ to denote derivatives with respect to $\nu$, the resulting EoM for the spacecraft are
\begin{equation}
	\label{eq:eom_matrix_form_er3bp}
	\begin{aligned}
		\bar{\mathbf{r}}' &= \mathbf{I_3}\bar{\mathbf{v}},\\
		\bar{\mathbf{v}}' &= \boldsymbol{\Omega}\bar{\mathbf{v}} + \nabla \bar{U}(\bar{\mathbf{r}},\nu),
	\end{aligned}
	\quad \text{with} \qquad
	\begin{aligned}
		\bar{\mathbf{r}}^T &=\begin{bmatrix}
			\bar{x} & \bar{y} & \bar{z}
		\end{bmatrix},\\
		\bar{\mathbf{v}}^T &=\begin{bmatrix}
			\bar{x}' & \bar{y}' & \bar{z}'
		\end{bmatrix}.
	\end{aligned}
\end{equation}
In contrast with \eqref{eq:eom_matrix_form}, the EoM \eqref{eq:eom_matrix_form_er3bp} are non-autonomous.

\subsection{Lagrange Points and Periodic Orbits}

The Lagrange points are five well-known equilibrium points of both EoM \eqref{eq:eom_matrix_form} and \eqref{eq:eom_matrix_form_er3bp}, and are typically ordered $L_1$ through $L_5$ by increasing effective potential \cite{szebehely1967TheoryOfOrbits}. Yet, rather than targeting the Lagrange points directly, the vast majority of missions track closed periodic orbits around them, since these offer ideal conditions for a wide variety of objectives. Most notably, the $L_1$ and $L_2$ orbital families have gained particular acclaim over recent years for being at the basis of the trajectories planned for NASA's Artemis missions \cite{creech2022Artemis}.


\section{Floquet Theory}
\label{sec:floquet_theory}

Linearization of the CR3BP EoM \eqref{eq:eom_matrix_form} about a nominal trajectory, denoted by ${\mathbf{x}^*}^T=\begin{bmatrix}
{\mathbf{r}^*}^T & {\mathbf{v}^*}^T
\end{bmatrix}$, yields the dynamics
\begin{equation}
	\label{eq:linearized_eom_cr3bp}
	\dot{\mathbf{z}} = \mathbf{A}(\mathbf{r}^*) \mathbf{z}, \qquad \text{with} \qquad \mathbf{A}(\mathbf{r}^*)=\begin{bmatrix}
		\mathbf{0}_3 & \mathbf{I}_3\\
		\mathbf{H}_U(\mathbf{r}^*) & \mathbf{\Omega}
	\end{bmatrix},
\end{equation}
where $\mathbf{0}_3$ is the $3\times3$ null matrix, $\mathbf{H}_U(\mathbf{r}^*)$ is the Hessian of $U$, evaluated at $\mathbf{r}^*:=\mathbf{r}^*(t)$, and $\mathbf{z}=\mathbf{x}-\mathbf{x}^*$ is the deviation from the target trajectory at each instant. In this work, we focus on tracking periodic orbits, such that \eqref{eq:linearized_eom_cr3bp} corresponds to a linear periodic system.
Floquet theory, which provides relevant insights on the stability of these systems, may thus be applied to infer on the stability of the target orbits themselves. 

To explore the application of Floquet theory, we rewrite \eqref{eq:linearized_eom_cr3bp} to make the independent variable explicit, i.e.
\begin{equation}
	\label{eq:generic_system}
	\dot{\mathbf{z}}(t) = \mathbf{A}(t) \mathbf{z}(t),
\end{equation}
noting that $\mathbf{A}(t+T)=\mathbf{A}(t)$, with $T$ the period of the target orbit.
Letting $\boldsymbol{\Phi}(t)$ denote the state-transition matrix (STM), with initial condition $\boldsymbol{\Phi}(0)=\mathbf{I}_n$, Floquet theory dictates that
\begin{equation}
	\label{eq:floquet_theorem}
	\boldsymbol{\Phi}(t) = \mathbf{P}(t)e^{t\mathbf{J}}\mathbf{P}^{-1}(0), ~\forall t \in \mathbb{R}^+,
\end{equation}
where $\mathbf{P}(t)$ and $\mathbf{J}$ are (possibly complex) $n\times n$ matrices -- the former $T$-periodic and non-singular for all $t$, and the latter constant \cite{wiesel1984ControlOfTimePeriodicSystems}. Since the STM is a solution of the system, substitution of \eqref{eq:floquet_theorem} in \eqref{eq:generic_system} yields the dynamics that govern $\mathbf{P}(t)$, i.e.
\begin{equation}
	\label{eq:floquet_P_dynamics}
	\dot{\mathbf{P}}(t) = \mathbf{A}(t) \mathbf{P}(t) - \mathbf{P}(t) \mathbf{J},
\end{equation}
after simplifications. Furthermore, since $\mathbf{P}(t)$ is $T$-periodic, taking $t=T$ in \eqref{eq:floquet_theorem} yields the \textit{monodromy matrix}, i.e.
\begin{equation}
	\label{eq:monodromy}
	\mathbf{M}:=\boldsymbol{\Phi}(T)=\mathbf{P}(T)e^{T\mathbf{J}}\mathbf{P}^{-1}(0) = \mathbf{P}(0)\boldsymbol{\Lambda}\mathbf{P}^{-1}(0),
\end{equation}
where $\boldsymbol{\Lambda}=e^{T\mathbf{J}}$ \cite{wiesel1983ModalControl}. Therefore, $\boldsymbol{\Lambda}$ collects the eigenvalues of $\mathbf{M}$, $\lambda_i,~i=\{1,\dots,n\}$, in Jordan form, and the respective eigenvectors are gathered in $\mathbf{P}(0)$, column-wise.

Floquet theory also establishes what is known as the \textit{modal transformation}, which significantly simplifies the dynamics at hand. This transformation takes the form ${\mathbf{z}(t)=\mathbf{P}(t)\boldsymbol{\eta}(t)}$, and may be differentiated and substituted in \eqref{eq:generic_system} to yield
\begin{equation*}
	\dot{\mathbf{P}}(t)\boldsymbol{\eta}(t) + \mathbf{P}(t)\dot{\boldsymbol{\eta}}(t) = \mathbf{A}(t)\mathbf{P}(t)\boldsymbol{\eta}(t).
\end{equation*}
Substitution of the dynamics of $\mathbf{P}(t)$, from \eqref{eq:floquet_P_dynamics}, leads to a linear time invariant counterpart of the original system,
\begin{equation*}
	\label{eq:floquet_LTI_sys}
	\dot{\boldsymbol{\eta}}(t) = \mathbf{J} \boldsymbol{\eta}(t).
\end{equation*}
The resulting coordinates, $\boldsymbol{\eta}$, represent the system's \textit{modes}, and $\mathbf{J}$ encodes the stability of each base direction \cite{wiesel1983ModalControl}. In fact, if $\boldsymbol{\Lambda}=\mathrm{diag}(\lambda_1,\dots,\lambda_n)$, then $\mathbf{J}=\mathrm{diag}(\sigma_1,\dots,\sigma_n)$, such that the $i$-th mode follows $\eta_i(t)=e^{t\sigma_i}\eta_i(0)$, where $\eta_i(0)$ is an initial condition and $\sigma_i=\frac{1}{T}\log(\lambda_i)$ is the corresponding Poincaré \textit{exponent} \cite{wiesel1984ControlOfTimePeriodicSystems}. The modal transformation thus exposes how the stable and unstable directions of the nominal orbit evolve with time, under the linear approximation, which is exploited in the design of the control law proposed in this work. A modal transformation in the ER3BP, relative to $\nu$, may be constructed in a similar manner.

\subsection{Remarks}
\label{subsec:floquet_remarks}

We remark that the existence of \textit{degenerate} modes may lead to linearly dependent directions in $\mathbf{P}(0)$ \cite{wiesel1984ControlOfTimePeriodicSystems}. In these cases, generalized eigenvalues may be employed to ensure $\mathbf{P}(0)$ is full-rank, such that \eqref{eq:floquet_theorem} and \eqref{eq:monodromy} still hold. Note also that retrieving $\mathbf{P}(t)$ requires the integration of \eqref{eq:floquet_P_dynamics}, from $\mathbf{P}(0)$. Given that, generally speaking, $\mathbf{P}(0)$ and $\mathbf{J}$ are complex matrices, this step may be numerically cumbersome. In addition, we will shortly discuss why a complex transformation matrix is undesirable for the purposes of developing the proposed control law. Fortunately, complex Poincaré exponents usually form conjugate pairs, such that $\mathbf{P}(0)$ and $\mathbf{J}$ may be written as real matrices by separating the exponents and eigenvectors into their real and imaginary parts \cite{wiesel1983ModalControl}. Exceptions evidence additional~dynamical symmetries, which may typically be regularized by redefining the modal transformation over a multiple of $T$ \cite{wiesel1994CanonicalFT}. Such cases are rare and outside the scope of this work, as is the case with the vast majority of the literature on this subject.

\section{Controller Design}
\label{sec:cont}

This work proposes a LQR control law for orbital station-keeping. Given the similarities between the CR3BP and ER3BP dynamics, we explore the derivation of this solution only under the former. Note that, in both versions of the R3BP, the linearized EoM are uniformly completely controllable and observable -- two necessary conditions which will not be shown in this work, but may be verified algebraically through a recursive matrix series, following \cite{batistaUCCUCO}.

Control is introduced to the linearized EoM \eqref{eq:generic_system} through the means of accelerations, $\mathbf{u}(t)\in\mathbb{R}^3$, yielding the system 
\begin{equation}
	\label{eq:linearised_eom_cr3bp_control}
	\dot{\mathbf{z}}(t) = \mathbf{A}(t) \mathbf{z}(t) + \mathbf{B} \mathbf{u}(t), \quad \text{where} \quad \mathbf{B}^T = \begin{bmatrix}
		\mathbf{0_3} & \mathbf{I_3}
	\end{bmatrix}.
\end{equation}

Then, the LQR problem may be established through the definition of a typical cost function. In \cite{ghorbani2013optimalcont}, a finite-horizon version is pursued, leading to a Riccati differential equation (RDE) that may be solved directly but requires the imposition of a boundary condition, which represents an additional parameter to be selected. In contrast, the work in \cite{Nazari2017LQR_Backstepping} considers an infinite-horizon alternative that discretizes the resulting RDE by solving an algebraic Riccati equation at each instant. This, however, fails to approximate a true solution of the original RDE. Moreover, both approaches employ constant weight matrices in the cost functions, tuned heuristically, that are insensitive to the time-varying (periodic) dynamics at play and the stability properties of the target orbits.

To address the shortcomings identified in the literature, we establish an infinite-horizon LQR problem through a cost function in the form of
\begin{equation}
	\label{eq:lqr_cost_function}
	\mathcal{J} = \int_{0}^{\infty}[\mathbf{z}^T(t)\mathbf{Q}(t)\mathbf{z}(t) + \mathbf{u}^T(t)\mathbf{R}(t)\mathbf{u}(t)]~ dt,
\end{equation}
where $\mathbf{R}(t)=\mathbf{R}_0=\alpha \mathbf{I}_3, ~\alpha>0,$ is made constant but $\mathbf{Q}(t)=\mathbf{Q}_0 + \mathbf{P}^{-T}(t)\mathbf{Q}_p\mathbf{P}(t)$ is periodic, exploiting the modal transformation, $\mathbf{P}(t)$, from \eqref{eq:floquet_P_dynamics}. In particular, we take
\begin{equation*}
	\mathbf{Q}_0 = \begin{bmatrix}
		\beta_r \mathbf{I}_3 & \mathbf{0}_3\\
		\mathbf{0}_3 & \beta_v \mathbf{I}_3
	\end{bmatrix}
	\qquad \text{and} \qquad
	\mathbf{Q}_p = \mathrm{diag}(\gamma_1,\dots,\gamma_6),
\end{equation*}
where $\beta_r$, $\beta_v$, and $\gamma_i,~i=\{1,\cdots,6\}$ are positive constants. By establishing $\mathbf{Q}$ with these two contributions, it is possible to weight not only the physical position and velocity deviations, but also each linear orbital mode separately. In this sense, the resulting cost function is sensible about the dynamics, making it possible to prioritize the elimination of deviations along the most unstable directions of the target orbits. Evidently, ensuring that $\mathbf{P}(t)$ is a real matrix bestows the LQR weights with a physical interpretation -- hence the relevance of the discussion in Section~\ref{subsec:floquet_remarks}.

As per \cite{kalman1960contributions}, minimization of \eqref{eq:lqr_cost_function} requires solving the RDE
\begin{equation}
	\label{eq:prde}
	\begin{aligned}
		\dot{\mathbf{S}}(t) = &-\mathbf{S}(t)\mathbf{A}(t) - \mathbf{A}^T(t)\mathbf{S}(t) -\mathbf{Q}(t)\\
		&+ \mathbf{S}(t)\mathbf{B}(t)\mathbf{R}^{-1}(t) \mathbf{B}^T(t)\mathbf{S}(t),
	\end{aligned}
\end{equation}
which, in this case, is $T$-periodic.
Under the aforementioned controllability and observability guarantees, and since $\mathbf{Q}(t)$ and $\mathbf{R}(t)$ are bounded and positive-definite, \eqref{eq:prde} has a unique positive-definite periodic solution, $\mathbf{S}^\star(t)$, to which all other solutions converge backwards in time \cite[Theorem 6.6]{bittanti1991PRDE}. An approximation of $\mathbf{S}^\star(t)$ may thus be obtained by successively solving \eqref{eq:prde} backwards, considering a time-window equal to (a multiple of) the nominal orbit period. To this end, we take $\mathbf{S}_k(T)=\mathbf{S}_{k-1}(0)$ as the boundary condition of the $k$-th iteration, and repeat the process until $\mathbf{S}_k(T)\approx\mathbf{S}_k(0)$. Given the convergence guarantees, the initial positive-definite matrix $\mathbf{S}_0(T)$ may be chosen freely. After retrieving $\mathbf{S}^\star(t)$, the linear optimal control law may be defined as follows.
\begin{proposition}
	\label{prop:control_law_LQR_CR3BP}
	Given the linearized CR3BP EoM \eqref{eq:linearised_eom_cr3bp_control}, the linear control law
	\begin{equation}
		\label{eq:LQR_control_law_CR3BP}
		\mathbf{u} = -\mathbf{K}(t)\mathbf{z}, \qquad \text{with} \qquad \mathbf{K}(t) =\mathbf{R}_0^{-1}\mathbf{B}^T\mathbf{S}^\star(t),
	\end{equation}
	where $\mathbf{S}^\star(t)$ is the unique positive-definite periodic solution of \eqref{eq:prde}, guarantees asymptotic stability of $\mathbf{z}=\mathbf{0}$. Moreover, it ensures that the controlled response of the linearized system minimizes the cost function \eqref{eq:lqr_cost_function}.
\end{proposition}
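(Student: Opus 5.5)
The plan is a Lyapunov argument with $V(t,\mathbf{z})=\mathbf{z}^T\mathbf{S}^\star(t)\mathbf{z}$, followed by a completion-of-squares argument for optimality.

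\textbf{Bounds on $V$.} First I establish uniform bounds on $\mathbf{S}^\star(t)$. It is continuous, $T$-periodic and positive-definite for every $t$, so by compactness of $[0,T]$ there exist constants $0<c_1\le c_2$ with $c_1\mathbf{I}_6\preceq\mathbf{S}^\star(t)\preceq c_2\mathbf{I}_6$ for all $t$. The same reasoning applies to $\mathbf{Q}(t)$. Since $\mathbf{Q}_0\succ0$ and $\mathbf{P}^{-T}\mathbf{Q}_p\mathbf{P}^{-1}\succeq0$, we have $\mathbf{Q}(t)\succeq\min(\beta_r,\beta_v)\mathbf{I}_6$. (The weight should presumably read $\mathbf{P}^{-T}\mathbf{Q}_p\mathbf{P}^{-1}$, so that $\mathbf{z}^T\mathbf{Q}\mathbf{z}$ penalizes the modes $\boldsymbol{\eta}$; in any case the $\mathbf{Q}_0$ term alone gives a uniform lower bound, so the argument does not depend on that part.) These bounds make $V$ positive-definite and decrescent.

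\textbf{Decrease along closed-loop trajectories.} Next I differentiate $V$ along $\dot{\mathbf{z}}=(\mathbf{A}-\mathbf{B}\mathbf{K})\mathbf{z}$ and substitute the RDE \eqref{eq:prde}. This gives
\begin{equation*}
\dot V=\mathbf{z}^T\big(\dot{\mathbf{S}}^\star+\mathbf{S}^\star\mathbf{A}+\mathbf{A}^T\mathbf{S}^\star-2\mathbf{S}^\star\mathbf{B}\mathbf{R}_0^{-1}\mathbf{B}^T\mathbf{S}^\star\big)\mathbf{z}=-\mathbf{z}^T\big(\mathbf{Q}+\mathbf{S}^\star\mathbf{B}\mathbf{R}_0^{-1}\mathbf{B}^T\mathbf{S}^\star\big)\mathbf{z},
\end{equation*}
and therefore $\dot V\le-\min(\beta_r,\beta_v)\norm{\mathbf{z}}^2\le-(\min(\beta_r,\beta_v)/c_2)V$. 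Standard Lyapunov theory for time-varying systems then yields uniform exponential, hence asymptotic, stability of $\mathbf{z}=\mathbf{0}$. This step is where $\mathbf{Q}\succ0$ does the work; the observability assumption would only be needed if $\mathbf{Q}$ were merely semidefinite, in which case one would instead use a uniform observability/LaSalle-type argument.

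\textbf{Optimality.} For any admissible $\mathbf{u}$ that drives $\mathbf{z}\to0$, I complete the square:
\begin{equation*}
\frac{d}{dt}\big(\mathbf{z}^T\mathbf{S}^\star\mathbf{z}\big)=-\mathbf{z}^T\mathbf{Q}\mathbf{z}-\mathbf{u}^T\mathbf{R}_0\mathbf{u}+(\mathbf{u}+\mathbf{K}\mathbf{z})^T\mathbf{R}_0(\mathbf{u}+\mathbf{K}\mathbf{z}).
\end{equation*}
Integrating over $[0,t_f]$ and letting $t_f\to\infty$ (the boundary term vanishes by the bound $\mathbf{S}^\star\preceq c_2\mathbf{I}_6$ together with $\mathbf{z}\to0$) gives
\begin{equation*}
\mathcal{J}=\mathbf{z}^T(0)\mathbf{S}^\star(0)\mathbf{z}(0)+\int_0^\infty(\mathbf{u}+\mathbf{K}\mathbf{z})^T\mathbf{R}_0(\mathbf{u}+\mathbf{K}\mathbf{z})\,dt.
\end{equation*}
The cost is therefore minimized exactly by \eqref{eq:LQR_control_law_CR3BP}, with optimal value $\mathbf{z}^T(0)\mathbf{S}^\star(0)\mathbf{z}(0)$.

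\textbf{Main obstacle.} The delicate point is justifying the restriction to inputs for which $\mathbf{z}\to0$, so that the boundary term vanishes. Any input with finite cost has $\int_0^\infty\mathbf{z}^T\mathbf{Q}\mathbf{z}\,dt<\infty$. Uniform positivity of $\mathbf{Q}$ then gives $\mathbf{z}\in L_2$, and $\dot{\mathbf{z}}\in L_2$ follows from the boundedness of $\mathbf{A}$ and $\mathbf{u}\in L_2$, so Barbalat-type reasoning yields $\mathbf{z}\to0$. Alternatively, one could cite \cite{bittanti1991PRDE} for the periodic LQ optimality result directly.
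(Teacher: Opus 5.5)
Your proof is correct, but it takes a different route from the paper. The paper's proof only cites Kalman's Theorems 6.7 and 6.10 for general time-varying LQR, under uniform complete controllability and observability (which the paper asserts but does not verify). Those theorems use the same mechanism as yours, namely $V=\mathbf{z}^T\mathbf{S}\mathbf{z}$ plus completion of squares. There, however, $\mathbf{Q}$ may be only semidefinite. So the upper and lower bounds on the Riccati solution have to come from the controllability and observability Gramians, and $V$ is shown to decrease only over time windows, via observability.

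You instead use two features specific to this problem:
\begin{itemize}
\item $\mathbf{Q}(t)\succeq\min(\beta_r,\beta_v)\mathbf{I}_6$ uniformly;
\item $\mathbf{S}^\star$ is continuous and $T$-periodic, so compactness gives the sandwich bounds $c_1\mathbf{I}_6\preceq\mathbf{S}^\star\preceq c_2\mathbf{I}_6$.
\end{itemize}
Together these make $\dot V$ uniformly negative definite. This buys you three things: the observability hypothesis becomes unnecessary; the conclusion upgrades to uniform exponential stability with explicit rate $\min(\beta_r,\beta_v)/c_2$; and you get the optimal value $\mathbf{z}^T(0)\mathbf{S}^\star(0)\mathbf{z}(0)$ and uniqueness of the minimizer explicitly. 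The existence of $\mathbf{S}^\star$ is still borrowed from periodic Riccati theory, as the statement allows, so stabilizability (implied by the asserted controllability) is still used implicitly there.

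Your treatment of the boundary term is sound. In fact $\liminf_{t\to\infty}\norm{\mathbf{z}(t)}=0$, which follows from $\mathbf{z}\in L_2$ alone, already suffices: pass to the limit along a sequence $t_k\to\infty$ with $\mathbf{z}(t_k)\to\mathbf{0}$. The Barbalat step is therefore more than you need.

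One correction: your parenthetical ``in any case the $\mathbf{Q}_0$ term alone gives a uniform lower bound'' is too strong. With the literal weight $\mathbf{P}^{-T}\mathbf{Q}_p\mathbf{P}$, the matrix is not symmetric, and its symmetric part can be indefinite. For example, if $\mathbf{P}$ is a shear with a large off-diagonal entry $a$, one diagonal entry of the symmetric part becomes $\gamma_2-a^2\gamma_1$. With $\gamma_1$ as large as in the numerical section, this could swamp $\mathbf{Q}_0$. Your bound, like the paper's own claim that $\mathbf{Q}(t)$ is positive-definite, genuinely requires the corrected form $\mathbf{P}^{-T}\mathbf{Q}_p\mathbf{P}^{-1}\succeq 0$. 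Your diagnosis of the typo is right, but you should drop the ``in any case'' hedge.
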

\begin{proof}
	A general proof for both statements is provided in \cite[Theorem 6.7, Theorem 6.10]{kalman1960contributions}, and may be particularized to the specific case under analysis by following the steps and assumptions discussed in this section.
\end{proof}

Note that $\mathbf{P}(t)$, $\mathbf{S}^\star(t)$, and $\mathbf{K}(t)$ need to be computed only once, ideally before mission deployment. Through the interpolation of $\mathbf{K}(t)$, an on-board implementation of the proposed control law can thus be very computationally light.

\section{Numerical Results}
\label{sec:results}

We evaluate the station-keeping capabilities of the proposed solution by considering small perturbations to target orbits within the CR3BP and the ER3BP, denoted C1 and E1, respectively. Both orbits belong to the $L_2$ Halo family and are closely linked in terms of dynamics and shape, as explored in \cite{Nunes2025backstepping}. The respective orbital period and Poincaré exponents are provided in Table~\ref{tab:test_cases_stab}, denoting the imaginary unit by $j$. In both cases, a single unstable mode exists. Its magnitude reveals that both orbits are highly unstable and hence ideal candidates for the tests performed in this work.

\begin{table}[!h]
	\caption{Stability properties and period of the nominal orbits.}
	\label{tab:test_cases_stab}
	\begin{center}
		\begin{tabular}{c c c}
			\hline
			Orbit& Poincaré Exponents & Orbital Period\\ \hline
			C1 & $(\pm 1.607,~ \pm 0.572j,~ 0,~ 0)$ & $\sim13.9$ days\\
			E1 & $(\pm 1.609,~ \pm 0.430j,~ \pm 0.004j)$ & $\sim27.9$ days\\ \hline
		\end{tabular}
	\end{center}
\end{table}

Given that we are trying to emulate an electric propulsion system, the inclusion of an actuation dead-band is pondered to \textit{(i)} limit the amount of time that the actuators are active, and \textit{(ii)} constrain the magnitude of the maneuvers according to an established minimum. To this end, control is turned off once $\norm{\mathbf{u}}<u_\text{min}$, where $u_\text{min}>0$ is a constant that depends on the particular actuator system employed. Since small deviations about the nominal orbit are typically acceptable, control action is made active again only when $\norm{\mathbf{u}}>u_\text{min}$ and $\norm{\mathbf{z}}>z_\text{th}$, where $z_\text{th}>0$ is a mission-dependent threshold. In this work, the values $u_\text{sat}=1\times10^{-7}~\text{m s}^{-2}$ ($10~\mathrm{\mu N}$, for a $100~\text{kg}$ spacecraft) and $z_\text{th}=100~\text{km}$ were selected.

We consider an initial perturbation with a non-dimensional magnitude of $10^{-7}$ along the most unstable direction of the orbit, i.e. that of mode $\eta_1$. This corresponds to a deviation of around $40~\text{m}$ or $0.1~\text{mm s}^{-1}$ from the initial target state. Since the orbits are highly unstable and the maneuvers are lower bounded in magnitude, the initial perturbation will never be fully eliminated. It is thus expected that, as a result of entering the control dead-band, the deviation will naturally increase until the next maneuver. Control action then reduces the deviation from the target orbit, and the cycle repeats.

Our particular analysis will keep $u_\text{min}$ and $z_\text{th}$ fixed, in order to study the effects of including dynamical information in the design of the linear optimal control law, through $\mathbf{Q}_p$ in \eqref{eq:prde}. These effects are quantified through two benchmarks,
\begin{equation*}
	\mathcal{E}_{v}= \int_{0}^{t_\text{sim}} \norm{\mathbf{u}(t)} dt \qquad \text{and} \qquad \tau_\text{active} = \frac{t_\text{active}}{t_\text{sim}},
\end{equation*}
where $t_\text{sim}$ is the simulation time span, and $t_\text{active}$ is the total time that control is active. In this sense, $\mathcal{E}_v$ measures the total velocity variation enacted by the actuators and $\tau_\text{active}$ their relative activity during the simulation.

The simulations are run over $10$ revolutions about orbit C1 and $5$ revolutions about orbit E1. Numerical integration of the modal transformation, periodic RDE, and EoM is performed through an adaptive-step Runge-Kutta method of order $8(9)$, with relative and absolute error tolerances set to $10^{-12}$.

\subsection{Application in the CR3BP}
\label{subsec:app_CR3BP}

Since constant-weight LQR solutions for station-keeping have already been analyzed in the literature \cite{Nazari2017LQR_Backstepping, ghorbani2013optimalcont}, we focus on isolating the effect of weighting the linear orbital modes, particularly the unstable mode. To this end, a preliminary heuristic study was carried out to find the weights $(\beta_r,\beta_v,\alpha)$ that minimize $\mathcal{E}_v$, for $\gamma_i=0,~\forall i$. To exclude limit cases, the analysis was restricted to choices that ensure ${\text{max}\norm{\mathbf{z}}<3z_\text{th}}$. For the CR3BP dynamics, this led to $(\beta_r,\beta_v,\alpha)=(2,1,3)$. Any other choice was found to break the deviation constraint or increase $\mathcal{E}_v$. In particular, we note that employing more control action in an uninformed manner, such as reducing $\alpha$ or increasing $\beta_r$ and/or $\beta_v$, deteriorates the benchmark value.

Then, we successively increase $\gamma_1$ to boost the preponderance of the unstable mode in the cost function \eqref{eq:lqr_cost_function}. The remaining orbital modes are not weighted due to their favorable stability properties. Moreover, $\beta_r$, $\beta_v$, and $\alpha$ are kept constant to isolate the effect of $\gamma_1$. The resulting benchmarks stemming from this study are provided in Table~\ref{tab:CR3BP_benchmarks}, allowing for a direct comparison with the results of the uninformed, constant-weight LQR (for $\gamma_1=0$). The maximum magnitude of the deviation and control command are also provided. 

\begin{table}[!h]
	\caption{Benchmarks of the CR3BP test cases, for $(\beta_r,\beta_v,\alpha)=(2,1,3)$.}
	\label{tab:CR3BP_benchmarks}
	\begin{center}
		\begin{tabular}{c c c c c c c}
			\hline
			 $\gamma_1$ & $0$ & $10$ & $50$ & $100$ & $200$ & \\ \hline
			$\mathcal{E}_v$ & $3.258$ & $3.462$& $2.619$ & $2.355$ & $2.628$  & $[\text{m s}^{-1}]$\\
			$\tau_\text{active}$& $54.2$ & $41.4$ & $36.0$ & $31.2$ &  $48.3$ & $\%$ \\
			$\max\norm{\mathbf{z}}$ & $2.887$ & $1.741$ & $2.090$ & $2.424$ & $2.531$ & $(\times z_\text{th})$\\ 
			$\max\norm{\mathbf{u}}$ & $3.726$ & $4.679$ & $5.682$ & $6.252$ & $6.535$ & $[\mu\text{m s}^{-2}]$\\ 
			 \hline
		\end{tabular}
	\end{center}
\end{table}

From the analysis of Table~\ref{tab:CR3BP_benchmarks}, one concludes that a prudent selection of $\gamma_1$ allows for a decrease in $\mathcal{E}_v$ of up to $ 30\%$. Note that this occurs in spite of having previously established that trying to induce more control action in an uninformed manner negatively affects this benchmark. Hence, we guarantee that the benefits observed are a direct consequence of including dynamical information in the design of the control law. These benefits are equally felt in terms of controller activity, which may also be reduced significantly through an appropriate choice of $\gamma_1$. Moreover, they appear to bear no negative effect on the adequacy of the controller at tracking the target trajectory, given how $\max\norm{\mathbf{z}}$ is kept below the value resulting from the application of the base LQR law. 
However, the improvements are less pronounced and possibly reversed for very large choices of $\gamma_1$. Furthermore, such choices lead to increasingly more demanding actuation commands, as evidenced by $\max{\norm{\mathbf{u}}}$, which would make the proposed solution unfit for the practical application under study. To this end, we take $\gamma_1=100$ as the optimal choice.

To better evaluate the differences between the uninformed LQR law and an appropriately tuned counterpart (namely, with $\gamma_1=100$), consider the system responses in Fig.~\ref{fig:CR3BP_resp}. The magnitude of the position and velocity deviations, unstable mode, and control command are provided. Apart from the non-dimensional mode, all quantities are re-scaled into physical units through the R3BP characteristic scales detailed in Section~\ref{sec:dyn}. An activity chart, at the bottom, evidences the periods of time where control action is employed.

\begin{figure}[t]
	\centering
	\includegraphics[width=0.9\linewidth, trim={0cm 2cm 2cm 0cm}, clip]{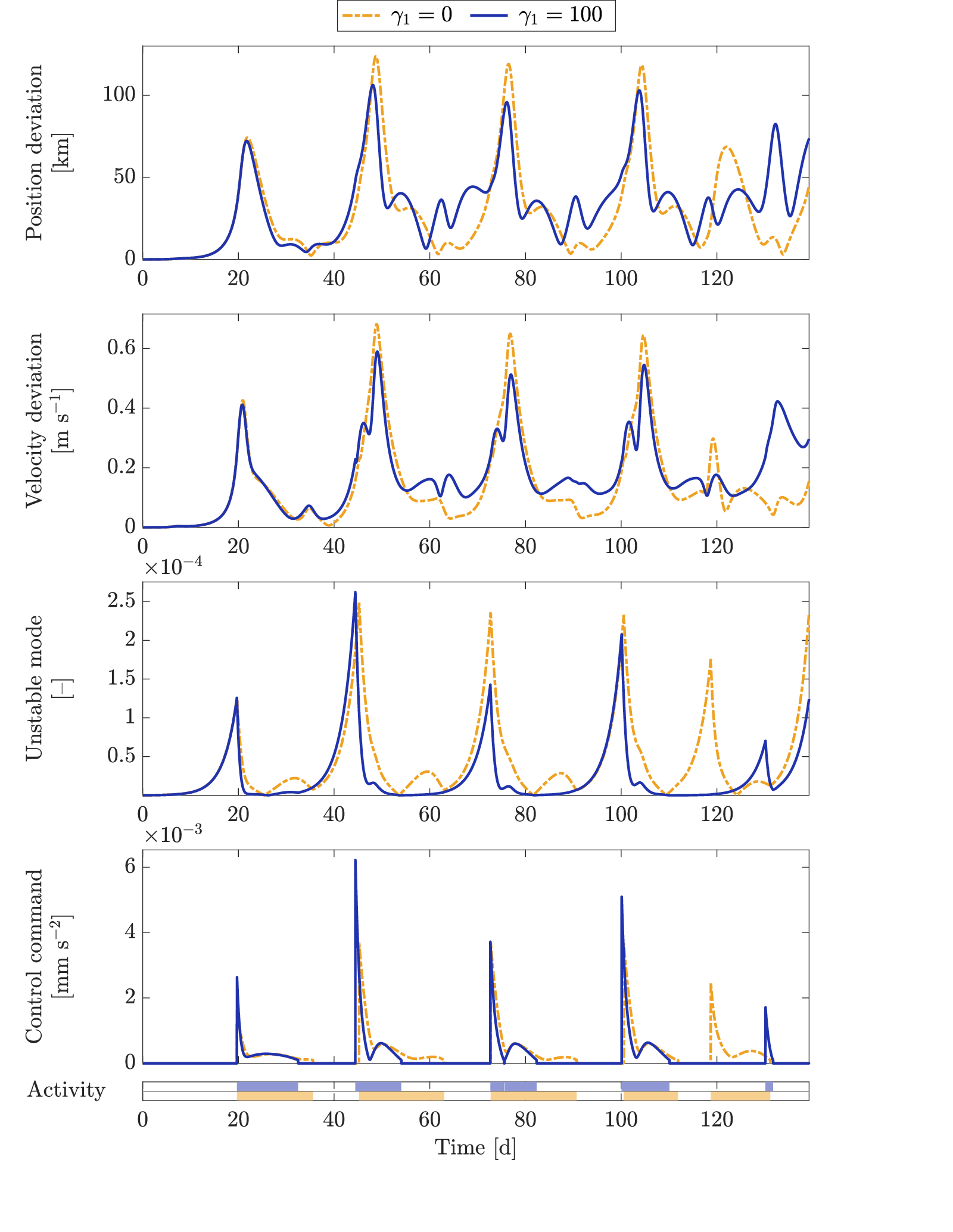}
	\caption{System responses under the CR3BP dynamics and the proposed LQR, with $\beta_r=2$, $\beta_v=1$, $\alpha=3$, and $\gamma_{i\neq1}=0$.  Two choices for the $\gamma_1$ weight are presented to evidence the difference between an uninformed LQR ($\gamma_1=0$) and a dynamically informed counterpart ($\gamma_1=100$).}
	\label{fig:CR3BP_resp}
\end{figure}

From the analysis of Fig.~\ref{fig:CR3BP_resp}, we confirm that the inclusion of dynamical information in the controller design can be made without a sacrifice to its tracking capabilities. This is evidenced by the physical deviations, which achieve peaks of similar magnitude to the uninformed case, after periods of control inactivity. The option with $\gamma_1=100$ is however much better suited at driving deviations along the unstable direction to zero, when maneuvers are employed. This leads to a two-fold improvement: \textit{(i)} the control solution is faster at eliminating the deviations, which grow predominantly along the unstable direction, and \textit{(ii)} by further reducing $\eta_1$ over each actuation window, the spacecraft remains in proximity to the target orbit for a longer time when control is inactive. As evidenced by the activity chart, these effects meaningfully reduce total maneuver time, in line with the benchmarks in Table~\ref{tab:CR3BP_benchmarks}. We draw however attention to the increase in magnitude in some of the control commands, which explains the counteracting effects verified for very large choices of $\gamma_1$.

To further validate the adequacy of the proposed solution, the controlled spacecraft trajectory is plotted in Fig.~\ref{fig:CR3BP_orbit}, under the same weights and $\gamma_1=100$. In contrast, the trajectory of an uncontrolled spacecraft is also provided, highlighting the need for station-keeping. A closeup $X$--$Y$ view of the region near the initial conditions, marked by a cross, evidences the tracking accuracy of the strategy employed and the departure of the uncontrolled spacecraft after a single revolution.

\begin{figure}[t]
	\centering
	\includegraphics[width=0.89\linewidth, trim={1cm 1.2cm 1cm 0cm}, clip]{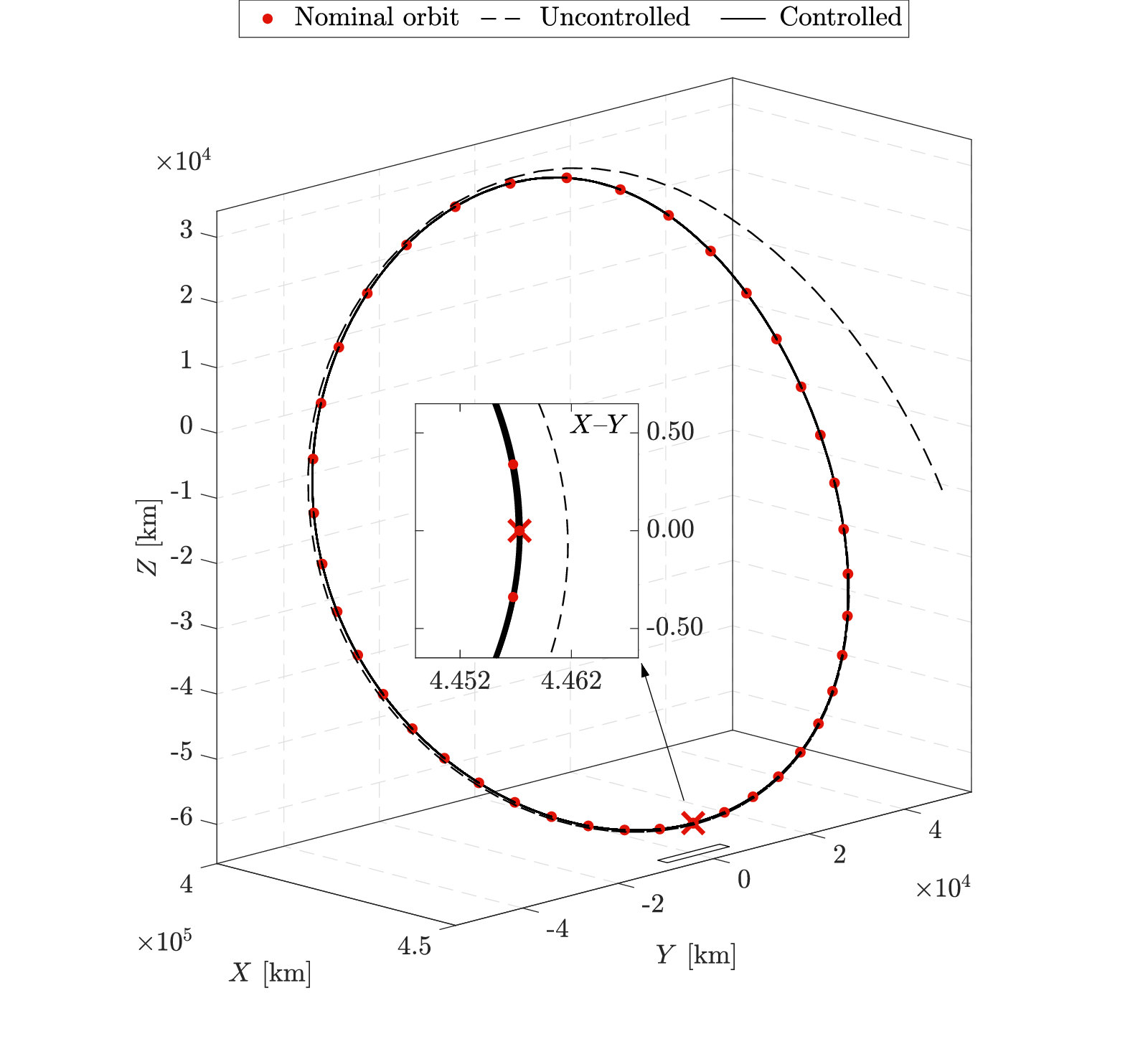}
	\caption{Trajectories in the CR3BP with $\beta_r =2$, $\beta_v=1$, ${\alpha=3}$, ${\gamma_1=100}$, and $\gamma_{i\neq1}=0$. A closeup $X$--$Y$ view near the initial conditions evidences the departure of an uncontrolled spacecraft after one revolution.}
	\label{fig:CR3BP_orbit}
\end{figure}

\subsection{Application in the ER3BP}

To study the application of the proposed station-keeping solution in the ER3BP, we proceed in a similar manner to the CR3BP case. Note, however, that an extra step is necessary to convert from the independent variable, $\nu$, to time. As detailed in \cite{Nunes2025backstepping}, this is done through Kepler's equation, assuming the primaries start at periapsis, i.e. $\nu\vert_{t=0}=0$. The optimization of $\mathcal{E}_v$ under similar considerations to those of Section~\ref{subsec:app_CR3BP} leads once more to the set of weights $(\beta_r,\beta_v,\alpha)=(2,1,3)$ -- which is unsurprising given the similarity between orbits C1 and E1, as previously evidenced in Table~\ref{tab:test_cases_stab}.

By increasing $\gamma_1$, one achieves the benchmarks in Table~\ref{tab:ER3BP_benchmarks}. We conclude that the increase in dynamical complexity of the ER3BP does not significantly change the conclusions previously drawn out for the CR3BP. Most notably, we once more observe that a careful selection of the $\gamma_1$ weight leads to benefits in control effort and total activity, without sacrificing the remaining performance measures. Similarly, increasing the weight past $\gamma_1>100$ proves to be unfruitful.

\begin{table}[!h]
	\caption{Benchmarks of the ER3BP test cases, for $(\beta_r,\beta_v,\alpha)=(2,1,3)$.}
	\label{tab:ER3BP_benchmarks}
	\begin{center}
		\begin{tabular}{c c c c c c c}
			\hline
			$\gamma_1$ & $0$ & $10$ & $50$ & $100$ & $200$ & \\ \hline
			$\mathcal{E}_v$ & $3.224$ & $5.308$& $2.815$ & $2.597$ & $2.042$  & $[\text{m s}^{-1}]$\\
			$\tau_\text{active}$& $41.9$ & $46.8$ & $27.5$ & $25.9$ &  $29.3$ & $\%$ \\
			$\max\norm{\mathbf{z}}$ & $2.866$ & $1.813$ & $2.207$ & $2.480$ & $2.527$ & $(\times z_\text{th})$\\
			$\max\norm{\mathbf{u}}$ & $5.159$ & $4.087$ & $4.882$ & $5.280$ & $5.471$ & $[\mu\text{m s}^{-2}]$\\ \hline
		\end{tabular}
	\end{center}
\end{table}

In Fig.~\ref{fig:ER3BP_resp}, the system responses under the LQR with $\gamma_1=0$ and $\gamma_1=100$ are provided. Analogously to the CR3BP case, the inclusion of this weight is found to result in a better tuned response that strongly opposes deviations along the direction of the most unstable mode, prolonging the windows of control downtime when entering the dead-band. Once again, these improvements are met without putting into question the tracking adequacy of the solution, as measured by the evolution of the position and velocity deviations.

\begin{figure}[t]
	\centering
	\includegraphics[width=0.9\linewidth, trim={0cm 2.1cm 2cm 0cm}, clip]{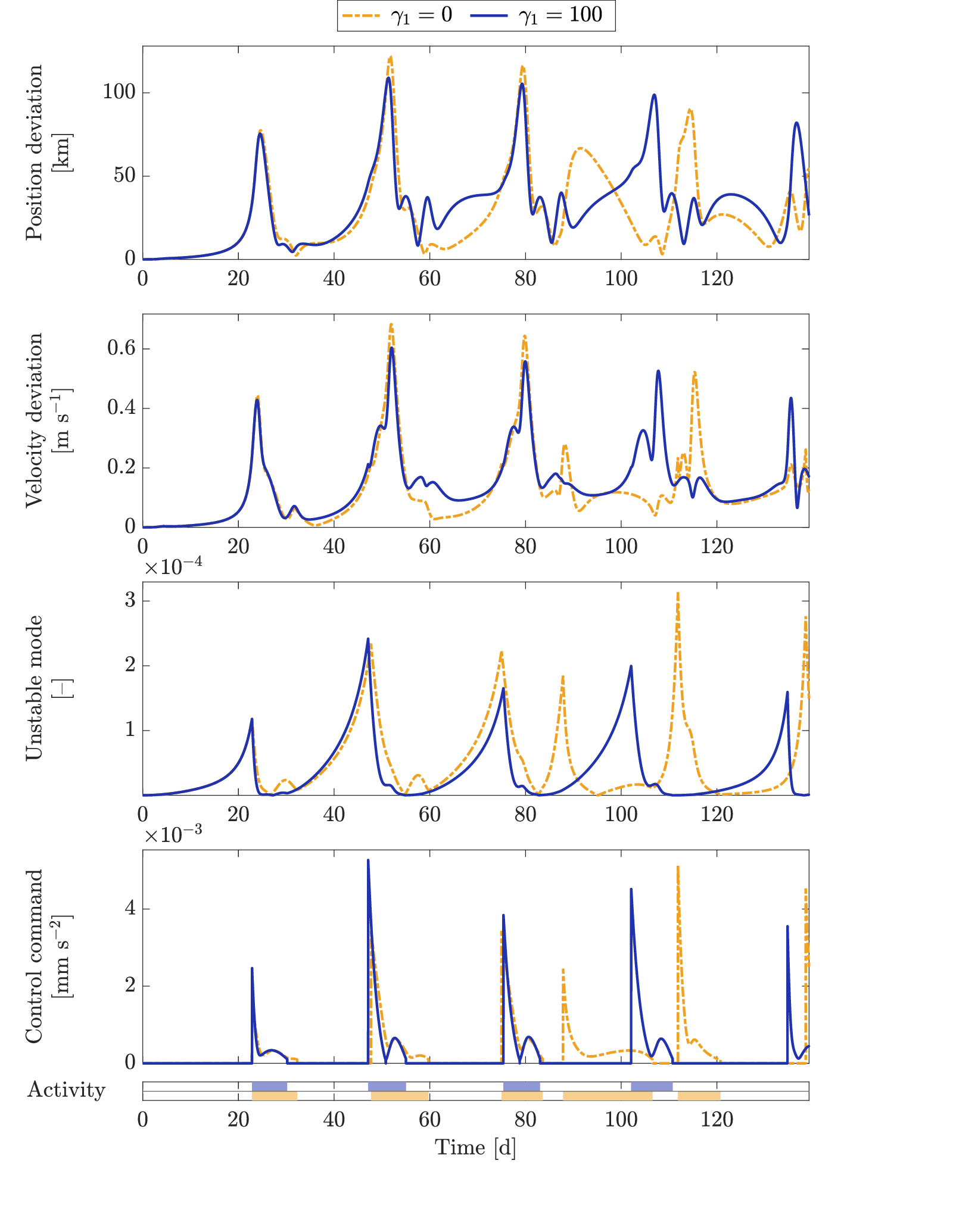}
	\caption{System responses under the ER3BP dynamics and the proposed LQR, with $\beta_r=2$, $\beta_v=1$, $\alpha=3$, and $\gamma_{i\neq1}=0$.  Two choices for the $\gamma_1$ weight are presented to evidence the difference between an uninformed LQR ($\gamma_1=0$) and a dynamically informed counterpart ($\gamma_1=100$).}
	\label{fig:ER3BP_resp}
\end{figure}

The effectiveness of the station-keeping approach under the previous weights and $\gamma_1=100$ is visually confirmed in Fig.~\ref{fig:ER3BP_orbit}. To this end, one may state that the control solution is equally adequate in both representations of the R3BP.

\begin{figure}[t]
	\centering
	\includegraphics[width=0.89\linewidth, trim={1cm 1.2cm 1cm 0cm}, clip]{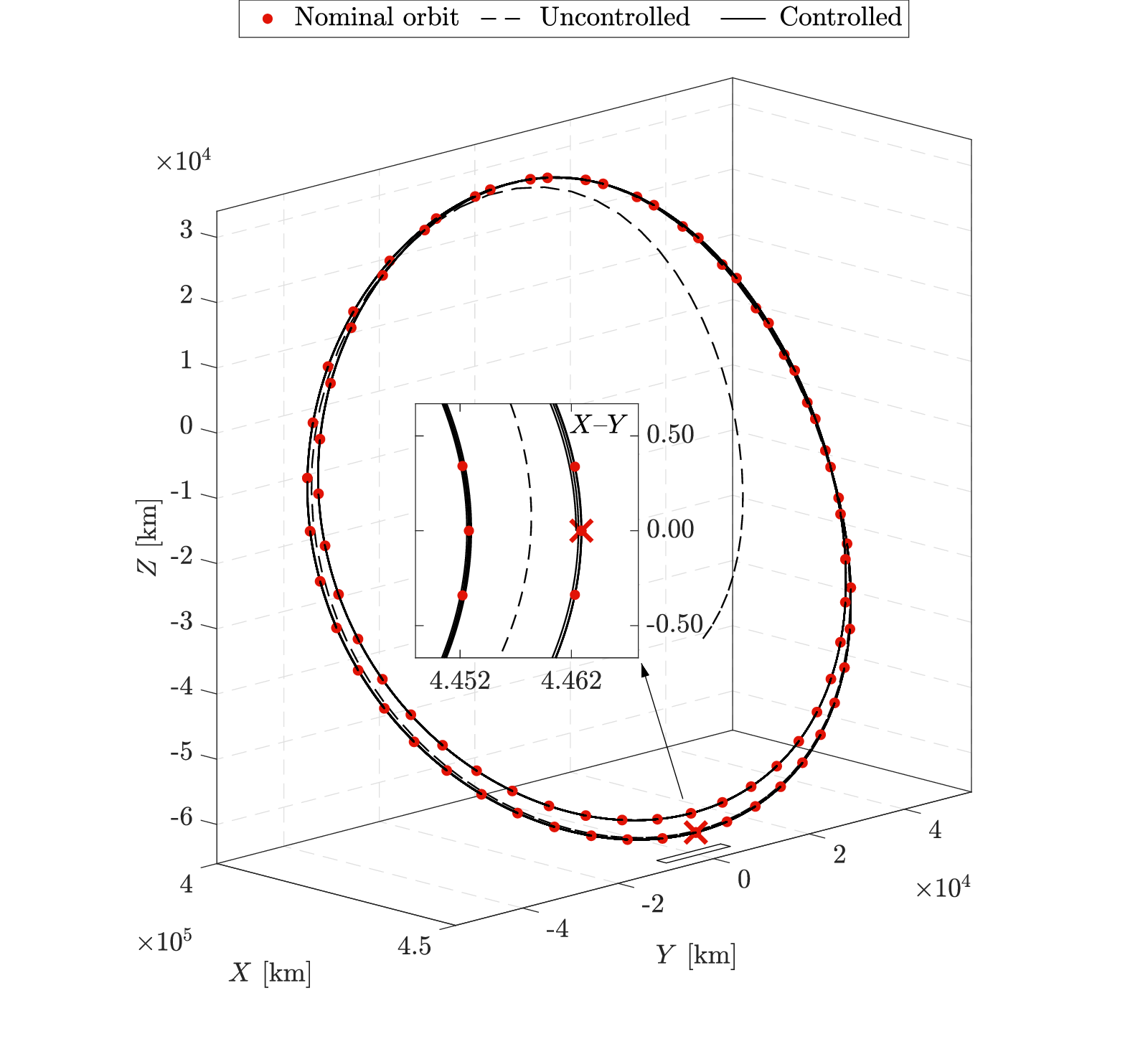}
	\caption{Trajectories in the ER3BP with $\beta_r =2$, $\beta_v=1$, ${\alpha=3}$, ${\gamma_1=100}$, and $\gamma_{i\neq1}=0$. A closeup $X$--$Y$ view near the initial conditions evidences the departure of an uncontrolled spacecraft after two revolutions.}
	\label{fig:ER3BP_orbit}
\end{figure}

\section{Concluding Remarks}
\label{sec:conc}
This work presents a linear optimal control law for continuous, low-thrust orbital station-keeping in the CR3BP and ER3BP. In contrast to existing constant-weight approaches, a cost function with periodic weights is formulated by combining LQR and Floquet theories to better exploit the underlying dynamics of the target orbits. The established cost function is minimized by solving a periodic RDE in an iterative manner. The resulting control law is shown to guarantee asymptotic stability of the linearized EoM. Through numerical simulations, the station-keeping capabilities of the control law are verified, and the effect of including dynamical information in the cost function is isolated and studied. We find that, when appropriately tuned, this consideration significantly improves control efforts and, if a dead-band is employed, also acts to reduce total controller activity.

Future work should assess the effect of external perturbations, e.g. solar radiation pressure or the gravitational influence of other celestial bodies. Moreover, extension to high-fidelity dynamical models accounting for realistic primary trajectories, retrieved from ephemeris data, should be studied.

%



\bibliographystyle{IEEEtran}
\bibliography{references}

\end{document}